\newtheorem{theorem}{Theorem}
\begin{document}
\title{Quantum Secure Protocols for Multiparty Computations}
\author{Tapaswini Mohanty, Vikas Srivastava, Sumit Kumar Debnath, Pantelimon St\u anic\u a
\thanks{Tapaswini Mohanty is with the Department of Mathematics, National Institute of Technology, Jamshedpur 831 014, India (e-mail: mtapaswini37@gmail.com). }

\thanks{Vikas Srivastava is with the Department of Mathematics, National Institute of Technology, Jamshedpur 831 014, India (e-mail: vikas.math123@gmail.com). }

\thanks{Sumit Kumar Debnath is with the Department of Mathematics, National Institute of Technology, Jamshedpur 831 014, India (e-mail: sd.iitkgp@gmail.com, sdebnath.math@nitjsr.ac.in). }

\thanks{Pantelimon St\u anic\u a is with the Department of Applied Mathematics, Naval Postgraduate School, Monterey, CA 93943, USA; (email: pstanica@nps.edu).} 
}

\markboth{}%
{}

\maketitle
\begin{abstract}
	Secure multiparty computation (MPC) schemes allow two or more parties to conjointly compute a function on their private input sets while revealing nothing but the output. Existing state-of-the-art number-theoretic-based designs face the threat of attacks through quantum algorithms. In this context, we present secure MPC protocols that can withstand quantum attacks. We first present the design and analysis of an information-theoretic secure oblivious linear evaluation (OLE), namely {\sf qOLE} in the quantum domain, and  show that our {\sf qOLE} is safe from external attacks. In addition, our scheme satisfies all the security requirements of a secure OLE. We further utilize {\sf qOLE} as a building block to construct a quantum-safe multiparty private set intersection (MPSI) protocol. 
\end{abstract}

\begin{IEEEkeywords}
quantum computing, multiparty computation, private set intersection, oblivious linear evaluation
\end{IEEEkeywords}

\section{Introduction}
Secure multiparty computation (MPC) is crucial in safeguarding sensitive data. It allows two or more parties to jointly do the calculations on their private data without revealing anything but the output. Thus, MPC guarantees security features like privacy and confidentiality. The oblivious evaluation of a function is one of the most important primitives in cryptographic designs. In the work of Rabin~\cite{michael1981exchange}, the idea of oblivious transfer (OT) was introduced. OT considers the setting where there are two parties: sender and receiver. The sender has two bits $s_0$ and $s_1$, and the receiver can only learn one of the bits $s_b$ depending on his choice of bit $b$. Later, it was shown in~\cite{kilian1988founding} that OT can be used for oblivious evaluation of any cryptographic function.
Over the past three decades, considerable advancements have been made in the design of generic OT-based MPC protocols. However, it is worth noting that specific types of functions can be evaluated more efficiently using direct constructions, bypassing the need for MPC. Considering this view, Naor et al.~\cite{naor2006oblivious} designed the oblivious polynomial evaluation (OPE). It is a useful primitive which solves the problem of obliviously evaluating a polynomial $P$ on an input $\alpha$. To be more precise, OPE is a two-party protocol between two distrustful parties, where one party (say Bob) holds a private polynomial $P(x)$, and another party (say Alice) has a private input $\alpha$. The goal of a secure OPE protocol is that Alice obtains $P(\alpha)$ and nothing else while Bob learns nothing. Oblivious linear evaluation (OLE) (which is just a special case of OPE) has been studied as an important primitive in secure MPC schemes for garbled arithmetic circuits~\cite{damgaard2012multiparty,applebaum2014garble}. Instead of obliviously evaluating a polynomial $P$, we restrict ourselves to a linear function $f(x) = ax + b$ in OLE. As noted in~\cite{cianciullo2018efficient}, OLE can also be used as an important building block in the design of Private Set Intersection (PSI) and its variants. We aim to work on this line of research in the quantum domain. PSI is an efficient and secure MPC protocol that facilitates clients to compute the intersection of private sets, ensuring that the confidential information of one party is not known to the others. Multiparty PSI (MPSI) is a naturally generalized version of the PSI protocol employed to find common elements in multiple sets without exposing and leaking information about the data except for the intersection. 

There exist many OLE and MPSI protocols~\cite{efraim2021psimple,mahdavi2020practical,inbar2018efficient,abadi2020feather,badrinarayanan2021multi,bay2021multi,zhang2019efficient,kolesnikov2017practical} but, as known, their security relies on some number theoretic hardness assumptions~\cite{rivest1978method,kravitz1993digital,koblitz1987elliptic} such as integer factorization problem and discrete logarithm problem. Because of Shor's algorithm~\cite{shor1999polynomial}, these schemes face a huge security threat, and once big enough quantum computers are available, these existing state-of-the-art designs will become obsolete. It has led cryptographers worldwide to find alternative ways to construct MPC protocols such as OLE and MPSI. One line of research involves developing protocols based on mathematical problems resistant to quantum attacks, forming what we call post-quantum cryptography (PQC)~\cite{srivastava2023overview,micciancio2009lattice,ding2017current}. However, PQC does not provide long-term security. Quantum cryptography (QC) offers a logical solution to nullify the future threat. It is because QC provides long-term protection and safety from the threat of quantum attacks. Therefore, it is need of hour to make use of QC for the construction of secure MPC protocols.

\noindent\textbf{Related works:} Several quantum two-party private set intersection (PSI) protocols have been developed over the last decade. In 2016, Shi et al.~\cite{shi2016efficient} introduced a quantum protocol for calculating the intersection of private sets. However, their design had a vulnerability that allowed the server to manipulate the intersection results unilaterally. To address this issue, Cheng et al.~\cite{cheng2017cryptanalysis} involved a passive third party in protocol~\cite{shi2016efficient} to ensure fairness. Shi et al.~\cite{shi2015quantum} later designed a protocol for the oblivious set member decision problem, which could be applied to private set intersection and union. In 2018, Maitra et al.~\cite{maitra2018quantum} proposed a two-party computation for set intersections involving rational players. Recently, Debnath et al.~\cite{debnath2021feasible,debnath2022quantum} developed quantum two-party PSI protocols that utilized single-photon quantum resources, enhancing feasibility. Liu et al.~\cite{liu2021novel} presented a novel QPSI protocol based on the quantum Fourier transform, and an improved version employing the Hadamard gate was later introduced by Liu et al.~\cite{liu2022improved}. Furthermore, Liu et al.~\cite{liu2021quantum} developed a quantum private set intersection cardinality (PSI-CA) protocol based on a bloom filter, and Wang et al.~\cite{wang2021quantum} proposed a quantum protocol for PSI-CA and union cardinality using entanglement swapping. Shi et al.~\cite{shi2022quantum} designed a quantum PSI-CA protocol with privacy-preserving condition queries. In 2020, Liu et al.~\cite{liu2020quantum} introduced a quantum secure MPSI-CA protocol utilizing quantum transformation, measurements, and parallelism. Shi et al.~\cite{shi2020quantum} developed a novel quantum protocol for MPSI-CA and provided corresponding quantum circuits, highlighting the advantages of quantum computing parallelism and measurement randomness. Mohanty et al.~\cite{mohanty2023quantum} designed a special variant of PSI (called threshold PSI) which outputs an intersection if and only if the intersection size is greater than a given threshold value. In 2023, Imran~\cite{imran2023secure} proposed a quantum MPSI by providing a framework for transforming the PSI problem into a problem of computing the greatest common divisor (GCD). Additionally, in 2023 Shi et al.~\cite{shi2023edge} designed a scheme for  XOR computation of multiple private bits. In the following, they used it to create a secure multiparty logical AND, which in turn was used to design a quantum secure MPSI.

\noindent\textbf{Our contribution:}
The major contributions of this paper are the following:
\begin{itemize}
	\item We first present an efficient and information theoretic secure oblivious linear evaluation (namely {\sf qOLE}). To the best of our knowledge, there is only one design of OLE (Santos et al.~\cite{santos2022quantum}) in the current-state-of-the-art of the quantum domain. We show that our protocol is more efficient and practical compared to~\cite{santos2022quantum}. {\sf qOLE} involves three entities: Alice, Bob, and a third party TP. It is made up of three phases, namely, key generation phase ( {\sf qOLE.Kg}), initialization phase ({\sf qOLE.Int}), and computation phase ({\sf qOLE.Comp}). Bob is in possession of a private linear function $f(x)=ax+b$ defined over $\mathbb{Z}_p$. On the other hand, Alice holds $\alpha \in \mathbb{Z}_p$ and aims to obtain $f(\alpha)$, obliviously. In the key generation phase {\sf qOLE.Kg}, Alice, Bob, and TP shares secret keys with each other by utilizing a quantum key distribution (QKD). In the initialization phase ({\sf qOLE.Int}), TP selects a random linear function $S(x)=a_1x+b_1$ and prepares the corresponding quantum state $\ket{S(x)}$. The state $\ket{S(x)}$ is transferred to Bob in the quantum encrypted form over a quantum channel. Additionally, TP randomly selects $d$, computes $g=S(d)$, and sends the quantum encrypted forms of $\ket{d}$ and $\ket{g}$ to Alice through quantum communication. In the final computation phase ({\sf qOLE.Comp}), Alice sends $l=\alpha-d$ to Bob. In the following, Bob computes $V(x)=f(x+l)+S(x)$ and sends it to Alice. Ultimately, Alice uses $V(x)$ to obtain $f(\alpha)$. All the communication between Alice and Bob in the computation phase takes over a quantum communication channel.
	
	\item Based on {\sf qOLE}, we design a secure quantum protocol for MPSI (namely {\sf qMPSI}). {\sf qMPSI} uses {\sf qOLE} as the fundamental building block. Thus, we also show that quantum secure OLEs can be used to design MPSI protocols. In fact, the design of {\sf qMPSI} is generic, in the sense that it can be instantiated with any information theoretic secure OLE.
\end{itemize}

\section{Preliminaries}

\subsection{Quantum Encryption}
We describe in detail the procedure of the quantum one time pad ({\sf qOTP})~\cite{boykin2003optimal} among an encryptor Alice and decryptor Bob. It consists of three algorithms: KeyGen, Enc and Dec which are discussed below:

\begin{itemize}
	\item $K\leftarrow KeyGen(n)$. On input a positive integer $n$ (length of message), KeyGen outputs a $\kappa(\leq2n)$ bit key $K$ which is shared between Alice and Bob by using some quantum key distribution protocol (e.g.,~\cite{bennett2020quantum}).
	\item $E_K(|P\rangle)\leftarrow Enc(|P\rangle, K)$. On input the quantum message $|P\rangle=\otimes^n_{i=1}|p_i\rangle$ with $|p_i\rangle=a_i|0\rangle+b_i|1\rangle$ and $|a_i|^2+|b_i|^2=1$, and the key $K$, Alice encrypts the message $|P\rangle$  in the following way:

	Computes $ \otimes^n_{i=1}X^{K_{2i}}|p_i\rangle$, i.e., if the $(2i)$-th bit of $K$ is $1$ then operates the unitary	operator $X$ on the $i$-th qubit of the message $|P\rangle$; otherwise does nothing, for $i =1, \ldots, n$.

	Executes $E_K(|P\rangle)=\otimes^n_{i=1}Z^{K_{2i-1}}X^{K_{2i}}|p_i\rangle$by operating Z on the $i$-th qubit of $X^{K_{2i}}|p_i\rangle$
	if the $(2i-1)$-th bit of $K$ is $1$ for $i= 1, \ldots , n$.
	\item $|P\rangle \leftarrow Dec(E_K(|P\rangle))$. Bob decrypts $E_K(|P\rangle)$ by performing the following steps:
	
	To get $X^{K_{2i}}|p_i\rangle$, operates the unitary operator $Z$ on the $i$-th qubit of $E_K(|P\rangle)=\otimes^n_{i=1}Z^{K_{2i-1}}X^{K_{2i}}|p_i\rangle$, if the $(2i-1)$-th bit of $K$ is $1$, for $i= 1, \ldots , n$. 
	
	Computes $|P\rangle$ by operating the unitary operator $X$ on the $i$-th qubit of the $X^{K_{2i}}|p_i\rangle$, if the $(2i)$-th bit of $K$ is~$1$, else does nothing.
\end{itemize}

\section{Proposed Quantum Oblivious Linear Evaluation}
 \label{sec:qole}
We now discuss the design and analysis of quantum secure OLE (namely, {\sf qOLE}). The protocol {\sf qOLE} involves three parties: Alice, Bob, and a third party TP,  and consists of three phases: key generation phase ( {\sf qOLE.Kg}), initialization phase ({\sf qOLE.Int}), and computation phase ({\sf qOLE.Comp}). Bob holds a private linear function denoted by $f(x)=ax+b$ defined over $\mathbb{Z}_p$. Alice has a private input $\alpha \in \mathbb{Z}_p$ and aims to obtain $f(\alpha)$, obliviously. In the key generation phase {\sf qOLE.Kg}, Alice, Bob, and TP share secret keys with each other by utilizing a quantum key distribution (QKD). In the next phase ({\sf qOLE.Int}), TP selects a random linear function $S(x)$ and sends it to Bob through the quantum channel. Additionally, TP randomly selects $d$, computes $g=S(d)$, and sends $d$ and $g$ to Alice through quantum communication. In the final computation phase ({\sf qOLE.Comp}), Alice sends $l=\alpha-d$ to Bob. In the following, Bob computes $V(x)=f(x+l)+S(x)$ and sends it to Alice. Ultimately, Alice uses $V(x)$ to obtain $f(\alpha)$. All the communication between Alice and Bob in the computation phase takes over a quantum communication channel.

\begin{description}
	
	\item[] {\sf qOLE.Kg}: In the key generation phase, TP shares the keys $K_A$, $K_B$ with Alice and Bob, respectively, using a QKD (for example, see~\cite{bennett2020quantum}). Similarly, Alice and Bob also share a secret key $K_{AB}$  between themselves by utilizing a QKD.\\
	\item[]{\sf qOLE.Int}:  In this phase, the following steps are to be executed:
	\begin{enumerate}
		
		\item In the initialization phase, TP randomly selects a linear function $S(x)=a_1 x+b_1$. For ease of notation, we write this polynomial as $S(x)=(a_1,b_1)$. Let $\Bar{a}_1$, $\Bar{b}_1$ denote the bit strings corresponding to $a_1$ and $b_1$ respectively. TP converts these bit strings into qubits and prepares a quantum state $|S(x)\rangle=(|\Bar{a}_1\rangle,|\Bar{b}_1\rangle)$ corresponding to the linear function $S(x)$. Here, $|\Bar{a}_1\rangle$ and $|\Bar{b}_1\rangle$ are sequences of photons, i.e., $|\Bar{a}_1\rangle=\{|\Bar{a}_{11}\rangle,\ldots, |\Bar{a}_{1 \log_2{p}}\rangle\}$, $|\Bar{b}_1\rangle=\{|\Bar{b}_{11}\rangle,\ldots, |\Bar{b}_{1 \log_2{p}}\rangle\}$ and $\Bar{a}_{1j},\Bar{b}_{1j}\in \{0,1\}$. TP encrypts $|S(x)\rangle$ using quantum OTP with key $K_B$ and gets $|S'(x)\rangle$. 
		
		\item In the following, TP inserts some decoy particles (randomly chosen from $\{|0\rangle, |1\rangle, |+\rangle, |-\rangle\}$) into $|S'(x)\rangle$ and obtains $|S''(x)\rangle$. TP notes the position and initial states of the decoy photons and sends $|S''(x)\rangle$ to Bob.

		\item TP randomly selects $d$ and computes $S(d)=g$ (say). TP first converts $d$ and $g$ into binary string. Later, it prepares the quantum states $|d\rangle$ and $|g\rangle$ corresponding to $d$ and $g$ respectively. In the following, it encrypts $|d\rangle$ and $|g\rangle$ using quantum OTP with key $K_A$ and gets $|d'\rangle, |g'\rangle$.  TP also inserts some decoy photons (randomly chosen from $\{|0\rangle, |1\rangle, |+\rangle, |-\rangle\}$) into $|d'\rangle, |g'\rangle$ and gets $|d''\rangle, |g''\rangle$. TP notes the position and initial states of the decoy photons and sends  $|d"\rangle, |g"\rangle$ to Alice.\\
		
	\end{enumerate}
	\item[]  {\sf qOLE.Comp}: The following steps are to be performed between Alice and Bob in the computation phase:
	
	\begin{enumerate}
		\item On receiving $|d''\rangle, |g''\rangle$, Alice first checks for eavesdropping. TP provides the position and initial states of the decoy particles. Alice aborts the protocol if the error rate exceeds the predefined threshold value; otherwise, proceeds to the next step. 
		\item Alice discards all decoy photons and gets $|d'\rangle, |g'\rangle$. Then Alice decrypts $|d'\rangle, |g'\rangle$ using $K_A$ and gets $|d\rangle, |g\rangle$. Alice measures the quantum state and obtains $d, g$. 
		\item Alice computes $l=\alpha-d$ and converts it into binary string and then into a sequence of photons $|l\rangle$. She encrypts $| l \rangle$ using a quantum OTP with $K_{AB}$. To avoid eavesdropping, she adds some decoy particles randomly chosen from $\{|0\rangle, |1\rangle, |+\rangle, |-\rangle\}$ into $|l'\rangle$, and obtains $|l''\rangle$. Alice notes the position and initial states of the decoy particles, and sends $|l''\rangle$ to Bob.
		
		\item On receiving $|S''(x)\rangle$ from TP, Bob and TP checks for eavesdropping; if the error rate exceeds the predefined threshold value then they abort the protocol; otherwise, they proceed to the next step. Bob also checks for eavesdropping with Alice. If the error rate exceeds a predefined threshold then they abort the protocol, otherwise they proceed to the next step. Bob discards all decoy photons from $|S''(x)\rangle$, $|l''\rangle$ and gets  $|S'(x)\rangle$, $|l'\rangle$. Bob decrypts  $|S'(x)\rangle$, $|l'\rangle$ using keys $K_B$, $K_{AB}$, respectively, and gets $|S(x)\rangle$, $|l\rangle$. He measures the state $|S(x)\rangle$, $|l\rangle$ using a  computational basis and obtains  $S(x)$ and $l$. 
		\item Bob computes $V(x)=f(l+x)+S(x)$. He converts $V(x)$ sequence of photons  $|V(x)\rangle$. In the following, $|V(x)\rangle$ is encrypted using a quantum OTP with key $K_{AB}$ to produce $|V'(x)\rangle$. Bob inserts some decoy photons from $\{|0\rangle, |1\rangle, |+\rangle, |-\rangle\}$ into $|V'\rangle$ and prepares $|V''\rangle$. Finally, $|V''\rangle$ is sent to Alice.
		\item On receiving $|V''\rangle$ from Bob,  Alice checks for eavesdropping. If the error rate is more than the the threshold, she aborts the protocol; otherwise, she proceeds. After discarding all the decoy photons from $|V''(x)\rangle$, she obtains $|V'(x)\rangle$. Alice decrypts  $|V'(x)\rangle$ using the key $K_{AB}$ to get $|V(x)\rangle$. Now $|V(x)\rangle$ is measured using the computational basis to obtain  $V(x)$. Next,  Alice computes $V(d)-g=f(l+d)+S(d)-g=f(\alpha)$.
		
	\end{enumerate}

\end{description}

\noindent \textbf{Correctness:} It is very easy to see that {\sf qOLE} is correct, i.e., at the end of the protocol Alice correctly obtains $f(\alpha)$. This follows because
\begin{align*}
V(d)-g &=f(l+d)+S(d)-g\\
&=f(\alpha-d+d)+S(d)-g\\
&=f(\alpha)+g-g\\
&=f(\alpha).
\end{align*}

\subsection{Toy Example}
Let $p=8$ and suppose Bob uses $f(x)=2x+3$ and Alice picks $\alpha=4\in \mathbb{Z}_8$. Assume that TP shares $K_A=011101010011$ with Alice and $K_B=101001110001$ with Bob. Let the secret key of Alice and Bob be $K_{AB}=111001000101$. TP chooses $S(x)=3x+1$ randomly. He picks $d=2$ and computes $g=S(d)=7$.

\noindent Note that, $a_1=3$, $b_1=1$ so $\Bar{a}_1=011$, $\Bar{b}_1=001$, therefore, $|S(x)\rangle=|0\rangle|1\rangle|1\rangle,|0\rangle|0\rangle|1\rangle$.
Now, {\small $|S'(x)\rangle=E_{K_B}(|S(x)\rangle)=Z^1X^0(|0\rangle) Z^1X^0(|1\rangle) Z^0X^1(|1\rangle) Z^1X^1(|0\rangle) Z^0X^0(|0\rangle) Z^0X^1(|1\rangle)$.}
i.e., $|S'(x)\rangle=|0\rangle (-|1\rangle) |0\rangle (-|1\rangle)|0\rangle |0\rangle$. Suppose that decoy particles are added at the $2^{nd},5^{th},6^{th}$ and $9^{th}$ positions.
Let $|S''(x)\rangle=|0\rangle |+\rangle (-|1\rangle) |0\rangle |0\rangle |-\rangle (-|1\rangle)|0\rangle |0\rangle|0\rangle$.

\noindent Note that $|d\rangle,|g\rangle=|0\rangle|1\rangle|0\rangle, |1\rangle|1\rangle|1\rangle$
so {\small $|d'\rangle,|g'\rangle=E_{K_A}(|d\rangle, |g\rangle)=Z^0X^1(|0\rangle) Z^1X^1(|1\rangle) Z^0X^1(|0\rangle) Z^0X^1(|1\rangle) Z^0X^0(|1\rangle) Z^1X^1(|1\rangle)$} i.e., $|d'\rangle,|g'\rangle=|1\rangle|0\rangle|1\rangle, |0\rangle|1\rangle|0\rangle$. Let the positions of the decoy photons be $1,2,6$ and $9$. Therefore, $|d''\rangle,|g''\rangle=|1\rangle |1\rangle|1\rangle|0\rangle|1\rangle |+\rangle |0\rangle|1\rangle |+\rangle |0\rangle$. On receiving $|S''(x)\rangle$ from TP, Bob checks for eavesdropping and discards all the decoy photons. After this step, Bob gets $|S'(x)\rangle=|0\rangle (-|1\rangle) |0\rangle (-|1\rangle)|0\rangle |0\rangle$. Later, he computes\\
 $|S(x)\rangle$=$D_{K_B}(|S(x)\rangle)=$ {\small  $X^0Z^1(|0\rangle) X^0Z^1(-|1\rangle) X^1Z^0(|0\rangle) \\X^1Z^1(-|1\rangle)X^0Z^0(|0\rangle)  X^1Z^0(|0\rangle)$}, i.e., $|S(x)\rangle=|0\rangle |1\rangle |1\rangle |0\rangle |0\rangle |1\rangle$. After measurement, Bob gets $\Bar{a}_1=011$, $\Bar{b}_1=001$, i.e., $S(x)=3x+1$.

\noindent On receiving $|d''(x)\rangle,|g''\rangle$ from TP, Alice checks for eavesdropping and discards all the decoy photons. Alice gets $|d'\rangle,|g'\rangle=|1\rangle|0\rangle|1\rangle, |0\rangle|1\rangle|0\rangle$. In the next step, Alice computes
{\small $|d\rangle,|g\rangle=D_{K_A}(|d'\rangle,|g'\rangle)=X^1Z^0(|1\rangle) X^1Z^1(|0\rangle) X^1Z^0(|1\rangle), X^1Z^0(|0\rangle) X^0Z^0(|1\rangle) X^1Z^1(|0\rangle)$} i.e., {\small $|d\rangle,|g\rangle=|0\rangle|1\rangle|0\rangle, |1\rangle |1\rangle |1\rangle$}. Alice does the measurements and obtains $d=2$, and $g=7$. Now Alice computes $l=\alpha-d=2$, converts it into qubits and gets  $|l\rangle=|0\rangle|1\rangle|0\rangle$.
Alice computes $|l'\rangle=E_{K_{AB}}(|l\rangle)=Z^1X^1(|0\rangle) Z^1X^0(|1\rangle) Z^0X^1(|0\rangle)$ i.e.,$|l'\rangle=(-|1\rangle) (-|1\rangle) |1\rangle$. Let the positions of the decoy photons be $1,2$ and $4$,
then $|l''\rangle=|+\rangle |-\rangle (-|1\rangle) |0\rangle (-|1\rangle) |1\rangle$
Alice sends $|l''\rangle$ to Bob.

\noindent Bob checks for eavesdropping and discards all decoy photons and gets $|l'\rangle=(-|1\rangle) (-|1\rangle) |1\rangle$. Bob computes $|l\rangle=D_{K_{AB}}(|l'\rangle)= X^1Z^1(-|1\rangle) X^0Z^1(-|1\rangle) X^1Z^0(|1\rangle)$, i.e., $|l\rangle=|0\rangle |1\rangle |0\rangle$. He measures the quantum state and gets $l=2$. Now Bob computes $V(x)=f(l+x)+S(x)=f(2+x)+3x+1=2(2+x)+3+3x+1=5x$ convert it into qubits $|V(x)\rangle=|1\rangle|0\rangle|1\rangle|0\rangle|0\rangle|0\rangle$,
computes  {\small $|V'(x)\rangle=E_{K_{AB}}(|V(x)\rangle)=Z^1X^1(|1\rangle) Z^1X^0(|0\rangle) Z^0X^1(|1\rangle) Z^0X^0(|0\rangle) Z^0X^1(|0\rangle) Z^0X^1(|0\rangle)$}, i.e., $|V'(x)\rangle=|0\rangle|0\rangle|0\rangle |0\rangle|1\rangle|1\rangle$. Let the positions of the decoy photons be $3,4,5$, and $8$, then 
$|V''(x)\rangle=|0\rangle|0\rangle |0\rangle|1\rangle|0\rangle|0\rangle |0\rangle|+\rangle|1\rangle|1\rangle$.
Bob send $|V''(x)\rangle$ to Alice.

\noindent On receiving $|V''(x)\rangle$ from Bob, Alice checks for eavesdropping and discards all the decoy photons. Alice gets  $|V'(x)\rangle=|0\rangle|0\rangle|0\rangle |0\rangle|1\rangle|1\rangle$. Now, she computes
{\small $|V(x)\rangle=D_{K_{AB}}(|V'(x)\rangle)=X^1Z^1(|0\rangle) X^0Z^1(|0\rangle) X^1Z^0(|0\rangle) X^0Z^0(|0\rangle) X^1Z^0(|1\rangle) X^1Z^0(|1\rangle)$} i.e., $|V(x)\rangle=|1\rangle|0\rangle|1\rangle |0\rangle|0\rangle|0\rangle$
Alice measures and gets $V(x)=5x$. Finally, she computes $V(d)-g=V(2)-7=2-7=-5\equiv3 \pmod{8}$ to obtain $f(\alpha=4)=3$.

\subsection{Security Analysis}
\subsubsection{External attacks} 

In this setting, an external attacker wants to obtain some information $\alpha$, $f(x)$, or $f(\alpha)$ by interfering the communication channel. The attacker only gets $|S''(x)\rangle$, $|d''\rangle,|g''\rangle$, $|l''\rangle$, and $|V''(x)\rangle$. As the attacker does not know the actual position of the decoy particles, he therefore cannot obtain  $|S'(x)\rangle$, $|d'\rangle,|g'\rangle$, $|l'\rangle$, and $|V'(x)\rangle$.

We now discuss the situation when an attacker applies the \textbf{ entangled measurement attack} on a decoy photon, say $|\phi\rangle \in \{|0\rangle, |1\rangle,|+\rangle, |-\rangle\}$. On receiving the decoy particle $|\phi\rangle$ he prepares an ancillary qubit $|0\rangle_a$ and operates an oracle operator $\mathcal{U}_f|x\rangle|y\rangle \mapsto|x\rangle|y\oplus f(x)\rangle$.

Case I:

If $|\phi\rangle$ is $|0\rangle$ or $|1\rangle$,
\begin{align*}
\mathcal{U}_f|\phi\rangle|0\rangle_a =
\begin{cases}
|0\rangle|f(0)\rangle_a & \text{if $|\phi\rangle=|0\rangle$,}\\
|1\rangle|f(1)\rangle_a & \text{if $|\phi\rangle=|1\rangle$.}
\end{cases}
\end{align*}

Case II:

If $|\phi\rangle$ is $|+\rangle$ or $|-\rangle$,
\begin{align*}
\mathcal{U}_f|\phi\rangle|0\rangle_a= & \frac{\mathcal{U}_f|0\rangle|0\rangle_a\pm \mathcal{U}_f|1\rangle|0\rangle_a}{\sqrt{2}}\\
=& \frac{1}{\sqrt{2}}{\frac{|0\rangle\pm|1\rangle}{\sqrt{2}}\otimes \frac{|f(0)\rangle_a\pm |f(1)\rangle_a}{\sqrt{2}}}.
\end{align*}
By the above analysis, we see that if ${|\phi \rangle} \in \{|0\rangle, |1\rangle\}$, then the external attacker can guess correctly, but if  ${|\phi \rangle} \in \{|+\rangle, |-\rangle\}$, then the success probability is $\frac{1}{2}$. In addition, all these photons are non orthogonal so these states are indistinguishable. Therefore, an outsider fails to obtain any information.

\noindent{\bf Intercept and resend attack:} During these kinds of attacks, an adversary intercepts and resends the sequence of photons $|\Bar{S''}(x)\rangle$ to Bob by intercepting the stream of photons $|S''(x)\rangle$ sent by TP to Bob. TP had added the decoy photons before sending $|S''(x)\rangle$ to Bob. Now, the adversary is unaware of the actual position
and the state of decoy photons. Note that decoy photons are randomly chosen out of $\{|0\rangle, |1\rangle, |+\rangle, |-\rangle\}$. Therefore, any external interception can be detected with a probability $1-(\frac{3}{4})^{\delta}$. When $\delta \ggg 0$, the probability of detecting an eavesdropping converges to $1$.

\noindent {\bf Trojan Horse attacks}: {\sf qOLE} may be susceptible to two types of Trojan horse attacks: \textit{delayed photon attacks} and \textit{invisible photon attacks}. These attacks involve manipulating photons during transmission to compromise the security of the protocol. To counter these attacks, participants can incorporate specific quantum optical devices, such as wavelength quantum filters and photon number splitters, during the protocol execution. To address the issue of invisible photons that may arise during transmission, a wavelength quantum filter can be employed to filter out these photons, ensuring that only legitimate photons are considered.
Similarly, for delayed photons that may be present, photon number splitters can be utilized to split each legitimate photon, thereby detecting any delayed photons. This enables the participants to identify and mitigate the effects of delayed photon attacks.

\subsubsection{Internal Attack}
To consider internal attacks, we assume that {\sf qOTP} is information theoretic secure~\cite{boykin2003optimal}. In addition, the following results hold.
\begin{theorem}
	Alice cannot obtain $f(x)$.
\end{theorem}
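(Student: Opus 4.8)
The plan is to show that Alice's entire view in the protocol is information-theoretically independent of Bob's secret coefficients $a,b$. First I would enumerate exactly what Alice receives over the course of {\sf qOLE}: from TP she obtains $|d''\rangle,|g''\rangle$, and after discarding decoys and decrypting with $K_A$ she recovers the pair $(d,g)$ where $d$ is chosen uniformly by TP and $g=S(d)=a_1 d+b_1$; from Bob she obtains $|V''(x)\rangle$, and after discarding decoys and decrypting with $K_{AB}$ she recovers the polynomial $V(x)=f(x+l)+S(x)$, where $l=\alpha-d$ was sent by Alice herself. Since {\sf qOTP} is information-theoretically secure, the encrypted transmissions reveal nothing beyond these decrypted values, so it suffices to argue at the level of the classical data $(d,g,V(x))$.

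The key step is the masking argument: I would fix arbitrary $a,b$ (equivalently $f$) and show that, conditioned on Alice's own inputs $\alpha$ and her view of $(d,g,V)$, every alternative linear function $f'(x)=a'x+b'$ is equally consistent. Write $V(x)=f(x+l)+S(x)=(a+a_1)x+(al+b+b_1)$. Because $S(x)=(a_1,b_1)$ is chosen uniformly at random by TP and is unknown to Alice, the pair $(a+a_1,\,al+b+b_1)$ is a uniformly random element of $\mathbb{Z}_p^2$ from Alice's standpoint — the map $(a_1,b_1)\mapsto(a+a_1,\,al+b+b_1)$ is a bijection on $\mathbb{Z}_p^2$ for any fixed $a,b,l$. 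Hence $V$ by itself carries no information about $(a,b)$. The one subtlety is the extra datum $g=S(d)=a_1 d+b_1$: I must check that the joint distribution of $(V,g)$ is still independent of $(a,b)$. This follows because $(a_1,b_1)\mapsto\bigl(a+a_1,\ al+b+b_1,\ a_1 d+b_1\bigr)$ is, for fixed $a,b,l,d$, an affine bijection of $\mathbb{Z}_p^2$ onto an affine subspace of $\mathbb{Z}_p^3$ whose image depends on $(a,b)$ only through a translation that is itself absorbed by the uniform randomness of $(a_1,b_1)$; concretely, one computes that $g$ is determined by $V$ and $(a_1,b_1)$ only up to the unknown $a_1$, so Alice can solve only for $f(\alpha)=V(d)-g$ and nothing more. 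I would phrase this as: for every candidate $f'$, there is exactly one choice of $S'$ consistent with Alice's transcript, so the posterior on $f$ is uniform over all linear functions agreeing with the single revealed value $f(\alpha)$.

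I would then conclude that the mutual information between Alice's view and $(a,b)$ equals the information contained in $f(\alpha)$ alone, which is exactly what a correct OLE must leak, so Alice learns nothing about $f(x)$ beyond $f(\alpha)$. The main obstacle I anticipate is handling the auxiliary value $g$ cleanly: one has to be careful that revealing both $g$ and $V$ does not accidentally pin down $a_1$ (which would in turn expose $a=V$'s leading coefficient minus $a_1$, hence $b$). The resolution is that $d$ is known to Alice but $S$ is not, so the single linear constraint $g=a_1 d + b_1$ leaves a full line of $(a_1,b_1)$ possibilities, matching a full line of $(a,b)$ possibilities; I would make this counting explicit rather than appealing to intuition. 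A secondary, more routine obstacle is stating precisely the reduction to the classical setting — i.e., that Alice's ability to choose measurement bases or deviate does not help, since the only quantum states she ever holds are {\sf qOTP}-encrypted under keys ($K_B$) she does not possess, so by the security of {\sf qOTP} those states are maximally mixed from her perspective and contribute nothing.
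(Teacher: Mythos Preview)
Your argument is correct and considerably more thorough than the paper's. The paper's proof is two sentences: Alice does not know the decoy positions, and recovering $f(x)$ from what she sees would require knowing $S(x)$, which is protected from her by the information-theoretic security of {\sf qOTP} (she never holds $K_B$). In other words, the paper simply asserts that the mask $S$ is hidden and therefore $f$ is hidden, without analyzing the joint distribution of $(d,g,V)$ that Alice actually obtains after legitimate decryption.

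Your route is a genuine strengthening: you reduce to the classical view $(d,g,V)$ via {\sf qOTP} security, and then carry out the one-time-pad style counting to show that, conditioned on this view, the posterior on $(a,b)$ is uniform over the line $\{(a,b):a\alpha+b=f(\alpha)\}$. This closes a gap the paper leaves implicit --- namely, that knowing both $V(x)=(a+a_1)x+(al+b+b_1)$ \emph{and} the extra equation $g=a_1 d+b_1$ still pins down only $f(\alpha)=V(d)-g$ and not $a$ or $b$ separately. Your explicit bijection ``for every candidate $f'$ with $f'(\alpha)=f(\alpha)$ there is exactly one $S'$ consistent with the transcript'' is the clean way to state this, and it is exactly what the paper's short proof is tacitly relying on. One small cleanup: your description of the map $(a_1,b_1)\mapsto(a+a_1,\,al+b+b_1,\,a_1 d+b_1)$ as an affine bijection onto a subspace ``depending on $(a,b)$ only through a translation'' is a bit loose; when you write it up, just compute directly that $(v_1,g)$ is uniform on $\mathbb{Z}_p^2$ and $v_2=g+f(\alpha)-v_1 d$ is then forced, which makes the dependence on $(a,b)$ visibly collapse to $f(\alpha)$ alone.
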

\begin{proof} Alice cannot obtain $f(x)$ because of two reasons. Firstly, she does not know the actual position of decoy particles. Secondly, the computation of $f(x)$ requires the knowledge of $S(x)$, but she cannot obtain $S(x)$, since obtaining $S(x)$ from $|S''(x)\rangle$ implies a break in the information theoretic security of~{\sf qOTP}. 
\end{proof}
\begin{theorem}
	Bob cannot obtain $\alpha$ and $f(\alpha)$.
\end{theorem}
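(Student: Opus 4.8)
The plan is to show that Bob's entire view in an execution of {\sf qOLE} is statistically independent of Alice's private input $\alpha$, which immediately rules out Bob learning either $\alpha$ or the value $f(\alpha)$ beyond what is already implied by his own knowledge of $f$. First I would enumerate everything Bob sees: his own input $f(x)=ax+b$; the incoming register $|S''(x)\rangle$ from TP; the incoming register $|l''\rangle$ from Alice; his local randomness together with the measurement outcomes $S(x)$ and $l$; and the polynomial $V(x)=f(x+l)+S(x)$ that he himself computes and forwards. The only one of these objects that depends on $\alpha$ at all is $l=\alpha-d$, and hence $V(x)$ through $l$.

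Second, I would argue that $l$ carries no information about $\alpha$. By construction $d$ is drawn uniformly at random from $\mathbb{Z}_p$ by TP, is transmitted only to Alice inside $|d''\rangle$ — encrypted with the one-time pad key $K_A$, which Bob does not hold, and hidden among decoy photons whose positions and states are known only to TP and Alice — and is never sent to Bob. Consequently, from Bob's perspective $d$ is a uniformly distributed value independent of $\alpha$, so $l=\alpha-d$ is a uniformly distributed element of $\mathbb{Z}_p$ whose distribution does not depend on $\alpha$; this is exactly a one-time-pad masking of $\alpha$. I would also note that even if Bob additionally tries to eavesdrop on the TP-to-Alice channel, recovering $d$ from $|d''\rangle$ would break the assumed information-theoretic security of {\sf qOTP}, and would be caught by the decoy-photon check, whose detection probability $1-(3/4)^{\delta}$ tends to $1$ as $\delta$ grows; so this line of attack also fails.

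Third, I would combine the two observations: since $V(x)=f(x+l)+S(x)=(ax+al+b)+S(x)$ depends on $\alpha$ only through $l$, and $l$ is distributed independently of $\alpha$, Bob's complete view is distributed independently of $\alpha$. Hence $\alpha$ remains uniform conditioned on Bob's view, so Bob cannot output $\alpha$ with probability better than $1/p$, and $f(\alpha)=a\alpha+b$ remains uniform over $\{az+b : z\in\mathbb{Z}_p\}$, so Bob cannot output $f(\alpha)$ either. The degenerate case $a=0$, in which $f$ is constant and $f(\alpha)=b$ is trivially already known to Bob, carries no secret and is excluded by the convention $a\in\mathbb{Z}_p^{\ast}$.

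The main obstacle I anticipate is not the information-theoretic masking argument, which is routine, but being precise about the adversary model: whether Bob is a purely passive participant or may also eavesdrop, and whether he may deviate in the computation phase, for instance by forwarding a malformed $|V''\rangle$. For a deviating Bob one has to check that no choice of outgoing message opens a back-channel that leaks $d$ to him, which again reduces to the security of {\sf qOTP} on the $K_A$-encrypted register $|d''\rangle$ and to the decoy-state eavesdropping test. I would therefore state these two facts as the assumptions the proof rests on and keep the remainder of the argument at the level of the distributional independence sketched above.
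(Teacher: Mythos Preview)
Your proposal is correct and follows essentially the same reasoning as the paper: Bob's only link to $\alpha$ is $l=\alpha-d$, and since $d$ (and $g$) are sent to Alice under {\sf qOTP} with key $K_A$ and shielded by decoy photons, Bob cannot recover them, so $l$ acts as a one-time pad on $\alpha$. The paper's proof is much terser---it simply asserts that computing $\alpha$ or $f(\alpha)$ requires $d$ and $g$, and that obtaining these from $|d''\rangle,|g''\rangle$ would break the information-theoretic security of {\sf qOTP}---whereas you spell out the view enumeration, the distributional-independence argument, and the $a=0$ edge case explicitly.
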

\begin{proof}
	Similar to the prior theorem, Bob can not obtain $\alpha$ and $f(\alpha)$ because of two reasons. Firstly, he also does not know the actual position of decoy particles. Secondly, the computation of $\alpha$ and $f(\alpha)$ requires the knowledge of $d$ and $g$. It is not possible to obtain any information about $d$ and $g$ because gaining any knowledge about $d$ and $g$ from $|d''\rangle$ and $|g''\rangle$ implies a break in the information theoretic security of~{\sf qOTP}. 
\end{proof}

\begin{theorem}
	TP cannot obtain $f(x)$, $\alpha$, and $f(\alpha)$.
\end{theorem}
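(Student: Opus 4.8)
The plan is to mirror the structure of the two preceding theorems, since TP is in an even weaker position than Alice or Bob: TP never receives any message in the protocol. I would organize the argument around the three quantities $f(x)$, $\alpha$, and $f(\alpha)$, and for each show that the only data TP ever holds ($S(x)$, $d$, $g$, the keys $K_A$, $K_B$, $K_{AB}$, and the decoy positions it chose itself) are information-theoretically independent of Bob's secret $f(x)$ and Alice's secret $\alpha$.

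First I would note that TP is the originator of $S(x)=a_1x+b_1$, $d$, and $g=S(d)$, and the sender (never the receiver) in {\sf qOLE.Int}; in {\sf qOLE.Comp} every message travels between Alice and Bob only. Hence TP's entire view consists of quantities it generated before any input-dependent value was transmitted, so there is no channel through which $f$ or $\alpha$ could leak to TP without TP actively eavesdropping on the Alice--Bob channel. Then I would argue that an eavesdropping TP is caught exactly like any external attacker: the decoy photons inserted by Alice and Bob (with positions and bases unknown to TP) make any intercept-and-resend or entangled-measurement attack detectable with probability $1-(3/4)^{\delta}$, appealing to the external-attack analysis already given. So on the non-abort branch TP learns nothing beyond its own randomness.

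For $f(x)$: even if TP somehow intercepted $V(x)=f(x+l)+S(x)$, recovering $f$ would require knowing $l=\alpha-d$; but $l$ is quantum-one-time-pad encrypted under $K_{AB}$, a key TP does not possess, so by the information-theoretic security of {\sf qOTP} (assumed here, following~\cite{boykin2003optimal}) TP gains nothing about $l$, hence nothing about $f$ from $V(x)$. For $\alpha$: $\alpha$ appears in the protocol only through $l=\alpha-d$ sent on the $K_{AB}$-encrypted channel, and since $l$ is perfectly hidden from TP and $d$ alone carries no information about $\alpha$, TP cannot obtain $\alpha$. For $f(\alpha)=V(d)-g$: this is computed by Alice from $V(x)$ (sent $K_{AB}$-encrypted, opaque to TP) together with $d,g$; lacking $V(x)$, TP cannot form $f(\alpha)$. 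I would then conclude by combining these three points.

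The main obstacle, as in the earlier theorems, is that the argument is only as strong as the stated assumption that {\sf qOTP} is information-theoretically secure and that decoy-state eavesdropper detection works; the real subtlety is ensuring that $S(x)$ and $d$ being \emph{known} to TP does not combine with any intercepted ciphertext to leak $f(\alpha)$. I would handle this by observing that $V(d)=f(\alpha)+g$ and $g$ is known to TP, so the only missing ingredient is $V(d)$ itself, which is a function of the $K_{AB}$-encrypted $|V(x)\rangle$ — and that is precisely what {\sf qOTP} security keeps hidden. Thus the whole statement reduces to the two assumptions already in force, and no new machinery is needed.
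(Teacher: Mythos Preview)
Your argument is essentially the paper's own proof, only more fully spelled out: the paper simply states that to learn $f(x)$, $\alpha$, or $f(\alpha)$ TP would need $V(x)$ and $l$, and that the information-theoretic security of {\sf qOTP} (under the key $K_{AB}$, which TP does not hold) prevents TP from extracting these from $|V''(x)\rangle$ and $|l''\rangle$. Your version adds the decoy-photon/eavesdropper-detection layer and explicitly handles the worry that TP's knowledge of $S(x)$, $d$, $g$ might combine with an intercepted $V(x)$ to reveal $f(\alpha)$; the paper leaves all of that implicit.

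One slip to fix: in your first paragraph you list $K_{AB}$ among ``the only data TP ever holds,'' but a few lines later you (correctly) argue that $K_{AB}$ is ``a key TP does not possess.'' The protocol's key-generation phase has Alice and Bob establish $K_{AB}$ between themselves via QKD, with TP excluded; your later reasoning depends on exactly this, so remove $K_{AB}$ from TP's view in the opening inventory. With that correction the proposal is sound and strictly subsumes the paper's two-sentence proof.
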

\begin{proof}
	To obtain any knowledge $f(x)$, $\alpha$, and $f(\alpha)$, TP needs information about $V(x)$ and $l$. The information theoretic security of {\sf qOTP} makes it impossible for TP to obtain $V(x)$ and $l$ from $|V''(x)\rangle$ and $|l''\rangle$.\end{proof}

\subsection{Efficiency and Comparison}
In this section, we will discuss the communication and
computational overhead of our proposed design {\sf qOLE}. The linear function $f(x)=ax+b$ is defined over $\mathbb{Z}_p$. TP sends $2\log_2{p}$ qubits to Bob and sends $\log_2{p}$ qubits to Alice. In addition, Alice sends $\log_2{p}$ qubits to Bob and Bob sends $\log_2{p}$ qubits to Alice. Therefore, the total communication cost is $\mathcal{O}(\log_2{p})$. 
We now discuss the quantum computation required for the execution of {\sf qOLE}.  A total of $\mathcal{O}(\log_2{p})$ qubits are needed to be prepared by Alice, Bob and TP. In addition, Pauli operators $X$ and $Z$ are used for doing the quantum computation. Projective measurements of $\mathcal{O}(\log_2{p})$ single qubits are required during the initialization and computation phase.

To the best of our knowledge, there is only one OLE protocol in the quantum domain. Santos et al.~\cite{santos2022quantum} in 2022 developed the first OLE protocol in the quantum domain. Similar to {\sf qOLE},~\cite{santos2022quantum} does not rely on quantum oblivious transfer. Unlike {\sf qOLE}, the design presented in~\cite{santos2022quantum} uses \textit{high-dimensional quantum states} to obliviously compute the linear function $f(x)$. {\sf qOLE} in contrast used only single photons which are very easy to prepare and operate.~\cite{santos2022quantum} utilize the Heisenberg-Weyl operators, while {\sf qOLE} only needs two dimensional quantum gates like $X$ gate and $Z$ gate. To summarize, {\sf qOLE} is more practical and efficient when compared to the design presented in~\cite{santos2022quantum}. {\sf qOLE} only uses single photons quantum resources, and thus, has the potential to be implemented in the near future. A comparative summary of {\sf qOLE} with~\cite{santos2022quantum} is given in Table~\ref{tab:my_label}. 

\begin{table*}[ht]
	\centering
	\begin{tabular}{|p{2.8cm}|p{3.0cm}|p{3.0cm}|p{3.0cm}|p{2.0cm}|}\hline
		Protocols  & Quantum  Resources & Quantum Operators  & Quantum \newline Measurements & Security \\ \hline
		Santos et al.~\cite{santos2022quantum} & higher dimensional quantum states & Heisenberg-Weyl operators& Projective measurements in
		$d$-dimension Hilbert space& Quantum \newline UC-security\\ \hline
		This protocol & $2$-dimensional quantum states & $2$-dimensional $X$ and $Z$ gate & Projective measurements in
		$2$-dimension Hilbert space& Information theoretic security \\
		\hline
	\end{tabular}
	\caption{Comparative summary of {\sf qOLE}}
	\label{tab:my_label}
\end{table*}

\section{Proposed Quantum MPSI}
In this section, we present the construction and analysis of an MPSI protocol (namely, {\sf qMPSI}) in the quantum domain. Our design utilizes the OLE protocol ({\sf qOLE}) described in Section~\ref{sec:qole}. We first give a high level overview of the protocol, followed by a detailed explanation.\\
\noindent \textbf{A high level overview.} In the design of {\sf qMPSI}, there are $m$ parties $A_1, A_2\ldots A_m$, each having private sets $S_{A_1}, S_{A_2}, \ldots, S_{A_m}$, respectively. These sets are defined over $\mathbb{Z}_p$ with $|S_{A_j}|=n$.  In the preparation stage, each $A_i$ forms a polynomial corresponding to their private sets. Without  loss of generality, we set $A_1$ to be initiator of the protocol. At the end of the execution of {\sf qMPSI}, $A_2$ computes the desired intersection and announces it for everybody. We assume that all the parties involved in our protocol are semi-honest.

\begin{description}
	\item[]{\sf qMPSI.PrepStage}: There are $m$ parties $A_1, A_2, \ldots, A_m$ with private sets $S_{A_1}, S_{A_2}, \ldots, S_{A_m}$. Corresponding to their private sets $S_{A_j}$, each of the $A_j$ define a polynomial $P_{A_j}$ of degree $n$ such that $P_{A_j}(\gamma)=0$ for all $\gamma \in S_{A_j}$. Let $Z$ be the set of all zeros of polynomials $P_{A_1}, P_{A_2}, \ldots, P_{A_m}$ and $\mathcal{S}$ be the set of polynomials whose zeros are not in $Z$.
	Next, each party $A_j$, $j=1,2,\ldots, m$, randomly chooses polynomials $r_{A_j},r_j\in \mathcal{S}$ of degree at most $n$. The sets $A_j$, $j=1,2,\ldots,m$, mask their private polynomial $P_{A_j}$ with $r_{A_j}$ and gets $P'_{A_j}$ . $A_1$ chooses a polynomial $u_{A_1}$ of degree $n$, computes $P_1=P'_{A_1}+u_{A_1}$. $A_1,A_2,\ldots,A_m$, and  selects $\alpha_1,\alpha_2, \ldots, \alpha_{3n+1}$.\\
	
	\item[] {\sf qMPSI.Intersection}: The following steps are performed to compute the intersection:
	\begin{enumerate}
		\item $A_1$ and $A_2$ jointly execute the {\sf qOLE} protocol with the help of TP. The input of $A_1$ and $A_2$ to {\sf qOLE} is $(P^i_1,r^i_1)$ and $P'^i_{A_2}$, respectively. The output of the protocol is $P^i_2={P'^i_{A_2}}{r^i_1}+P^i_1$, where $P'^i_{A_2}=P'_{A_2}(\alpha_i)$, $P^i_1=P_1(\alpha_i)$ and $r^i_1=r_1(\alpha_i)$. $A_1$ and $A_2$ execute {\sf qOLE}  protocol, for $i=1,2,\ldots,3n+1$.
		
		\item For $j=3,\ldots,m$,
		similarly, $A_{j-1}$ and $A_{j}$  jointly execute the {\sf qOLE} protocol with the help of TP. The input of $A_{i-1}$ to {\sf qOLE} is $(P^i_{j-1},r^i_{j-1})$ while the input of $A_j$ to {\sf qOLE} is $P'^i_{A_j}$. The output of the protocol is $P^i_j={P'^i_{A_j}}{r^i_{j-1}}+P^i_{j-1}$, where $P'^i_{A_j}=P'_{A_j}(\alpha_i)$, $P^i_{j-1}=P_{j-1}(\alpha_i)$ and $r^i_{j-1}=r_{j-1}(\alpha_i)$. $A_{j-1}$ and $A_j$ execute this {\sf qOLE} protocol for $i=1,2,\ldots,3n+1$.
		
		\item $A_2, A_3, \ldots, A_m$ share a common polynomial $u$ with each other.
		\item $A_m$ computes $R^i=P^i_m+u^i={P'^i_{A_m}}{r^i_{m-1}}+P^i_{m-1}+u^i$, for $i=1,2,\ldots,3n+1$, where $u^i=u(\alpha_i)$ and sends the resulting values to $A_1$.
		\item $A_1$ subtract $u^i_{A_1}$ from
		$R^i$, for all $i=1,2,\ldots,3d+1$ and sends it to $A_2$. $A_2$ subtracts $u^i$ for all $i=1,2,\ldots, 3n+1$ and interpolates these $3n+1$ points and gets the desired intersection polynomial $P_{\cap}$. At the end of the protocol, $A_2$, outputs all $\gamma \in S_{A_2}$ for which $P_{\cap}(\gamma)=0$. $A_2$ announces the intersection.    
	\end{enumerate}
\end{description}
\subsection{Correctness}
The correctness of {\sf qMPSI} follows in a straightforward manner from the correctness of {\sf qOLE}. $A_2$ received $R^i-u^i_{A_1}$ with $i=1,2,\ldots,3n+1$ from $A_1$. $A_2$ subtracts $u^i$ for all $i=1,2,\ldots, 3n+1$ and interpolates these $3n+1$ points and gets the desired intersection polynomial $P_{\cap}$, where
\begin{align*}
P_{\cap}&=P_{A_m}r_{A_m}r_{m-1}+P_{A_{m-1}}r_{A_{m-1}}r_{m-2}\\
&\qquad+\cdots +P_{A_2}r_{A_2}r_{1}+P_{A_1}r_{A_1}.
\end{align*}
We know that $P_{A_j}$, for $j=1,2,\ldots,m$ is the polynomial corresponding to the set $S_{A_j}$, for $j=1,2,\ldots,m$, i.e., for $j=1,2,\ldots,m$ if $x\in S_{A_j}$, then $P_{A_j}(x)=0$ and $r_{A_j}$, $r_j$ are nonzero and belongs to $Z$, for all $j=1,2,\ldots,m$.

If $x\in S_{A_1}\cap S_{A_2}\cap \ldots \cap S_{A_m}$, then
$ x\in S_{A_j}$ for all $j=1,2,\ldots,m$, so
 $P_{A_j}(x)=0$ for all $j=1,2,\ldots,m$, which renders
$ P_{\cap}(x)=0$.

If $P_{\cap}(x)=0$, then
$  [P_{A_m}r_{A_m}r_{m-1}+P_{A_{m-1}}r_{A_{m-1}}r_{m-2}+\cdots+P_{A_2}r_{A_2}r_{1}+P_{A_1}r_{A_1}](x)=0$, so
 $P_{A_j}(x)=0$ for all $j=1,2,\ldots,m$ as for $j=1,2,\ldots,m$, $r_{A_j}(x)\neq0$ and $r_j(x)\neq0$, and therefore
$x\in S_{A_j}$ for all $j=1,2,\ldots,m$ i.e., $x\in S_{A_1}\cap S_{A_2}\cap \ldots \cap S_{A_m}$.

\subsection{Security Analysis}
\begin{theorem}
	$A_1$ cannot learn anything about the private sets of other parties.
\end{theorem}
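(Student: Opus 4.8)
The plan is to prove the statement in the standard semi-honest, simulation-based sense: to build a simulator that, given only $A_1$'s input $S_{A_1}$ together with its private randomness and the publicly announced intersection, reproduces $A_1$'s view in an execution of {\sf qMPSI}, and to argue that the simulated and real views are identically distributed. The argument reduces to two facts already available: the security of {\sf qOLE} against a corrupted Bob from Section~\ref{sec:qole} (a corrupted Bob learns nothing about the other party's input or about the evaluated value), and a one-time-pad argument over $\mathbb{Z}_p$ for the single batch of scalars $A_1$ receives from $A_m$.

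First I would pin down exactly what $A_1$ sees. As $A_1$ is semi-honest, its view consists of its input and randomness (hence the polynomials $r_{A_1}, r_1, u_{A_1}$ and the derived $P_{A_1}, P'_{A_1}, P_1$), the announced intersection, and the messages received during {\sf qMPSI.Intersection}, which are exactly: (a) the transcripts of the $3n+1$ runs of {\sf qOLE} in Step~1, in each of which $A_1$ plays Bob with input $(P^i_1,r^i_1)$ while $A_2$ plays Alice with secret input $P'_{A_2}(\alpha_i)$; and (b) the $3n+1$ scalars $R^i = P^i_m + u^i$ forwarded by $A_m$ in Step~4. Note that $A_1$ does not participate in the {\sf qOLE} calls of Step~2 nor in the sharing of the common polynomial $u$ in Step~3, so it receives nothing there. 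For part (a) I would invoke the security of {\sf qOLE} against a corrupted Bob (which itself rests on the information-theoretic security of {\sf qOTP} and the decoy-photon eavesdropping analysis of Section~\ref{sec:qole}): in each run, Bob's transcript is simulatable from $A_1$'s own input $(P^i_1,r^i_1)$ alone, so the runs leak nothing about $P'_{A_2}(\alpha_i)$ and hence nothing about $S_{A_2}$; moreover $S_{A_3},\dots,S_{A_m}$ enter the computation only downstream of $A_2$ and their contributions reach $A_1$ solely inside the values $R^i$, so nothing about them leaks here either.

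For part (b), the key point is that $u$ is drawn uniformly at random and shared only among $A_2,\dots,A_m$, so $A_1$ never receives $u$. Unrolling the recursion of {\sf qMPSI.Intersection} gives $P_m = \sum_{j=2}^{m} P'_{A_j} r_{j-1} + P'_{A_1} + u_{A_1}$, so $R^i = P^i_1 + \big(\sum_{j=2}^{m} P'_{A_j}(\alpha_i)\, r_{j-1}(\alpha_i)\big) + u(\alpha_i)$; since $A_1$ knows $P^i_1$, the most it can isolate is $\big(\sum_{j=2}^{m} P'_{A_j}(\alpha_i)\, r_{j-1}(\alpha_i)\big) + u(\alpha_i)$ for $i=1,\dots,3n+1$. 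Taking $u$ uniform among polynomials of degree at least $3n$, the values $u(\alpha_1),\dots,u(\alpha_{3n+1})$ at the $3n+1$ distinct points are uniform and independent over $\mathbb{Z}_p$ (a Vandermonde/Lagrange argument), so they act as a perfect one-time pad and the isolated vector is uniformly distributed and independent of $S_{A_2},\dots,S_{A_m}$ --- indeed independent even of the announced intersection, since the other parties' masking polynomials $r_{A_j}, r_j$ are never revealed to $A_1$. The simulator therefore runs the simulator for a corrupted Bob in {\sf qOLE} $3n+1$ times for part (a) and samples each $R^i$ independently and uniformly from $\mathbb{Z}_p$ for part (b), which by the above reproduces $A_1$'s real view and proves the claim.

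The step I expect to be the main obstacle is the joint argument that parts (a) and (b) are independent conditioned on $A_1$'s input, so that composing the two simulators is sound: one must observe that in part (a) $A_1$ itself chooses the {\sf qOLE} inputs and never receives the {\sf qOLE} outputs (which go to $A_2$), so those transcripts are uncorrelated with the downstream $R^i$ beyond what $A_1$ already holds, and one must use that the mask $u$ is sampled freshly and independently of everything in Steps~1--2. A small but necessary side point is to make the degree of $u$ explicit --- for the above simulation strategy it must be at least $3n$, with uniformly random coefficients, so that its $3n+1$ evaluations are genuinely uniform --- a detail the protocol description leaves implicit.
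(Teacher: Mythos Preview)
Your proposal is correct and rests on the same central idea as the paper's proof: the values $A_1$ receives from $A_m$ are additively masked by the evaluations $u(\alpha_i)$ of a polynomial $u$ that is shared only among $A_2,\dots,A_m$, so $A_1$ cannot strip off this mask and learns nothing about the other parties' contributions. The paper's proof states exactly this, expanding $R^i$ (written there as $P_m^i$) and observing that after $A_1$ subtracts its own known terms, the residue $P^i_{A_m}r^i_{A_m}r^i_{m-1}+\cdots+P^i_{A_2}r^i_{A_2}r^i_1+u^i$ is hidden because $A_1$ does not know $u^i$.

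Where you go beyond the paper is in packaging the argument as a full simulation and in explicitly treating the $3n+1$ {\sf qOLE} transcripts in which $A_1$ plays Bob; the paper's proof is silent about those transcripts and simply appeals to ignorance of $u^i$. Your Vandermonde observation (that $u$ must have degree at least $3n$ for the $3n+1$ evaluations to be jointly uniform) is a genuine tightening: the paper does not specify the degree of $u$ and its proof only uses the informal ``$A_1$ does not know $u^i$'' rather than a perfect one-time-pad statement. Both approaches reach the same conclusion, but yours is more careful about what precisely must be assumed of $u$ and about the composition with {\sf qOLE} security.
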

\begin{proof}
	$A_1$ has $P_{A_1}$, $r_{A_1}$, $u_{A_1}$, $P_1$, and $r_1$. At the end, $A_m$ sends $P^i_m=P^i_{A_m}r^i_{A_m}r^i_{m-1}+P^i_{A_{m-1}}r^i_{A_{m-1}}r^i_{m-2}+\cdots+P^i_{A_2}r^i_{A_2}r^i_{1}+P^i_{A_1}r^i_{A_1}+u^i_{A_1}+u^i$, for $i=1,2\ldots,3n+1$ to $A_1$. $A_1$ subtracts $P^i_{A_1}r^i_{A_1}+u^i_{A_1}$ from $P^i_m$, for $i=1,2,\ldots,3n+1$ and gets $P^i_{A_m}r^i_{A_m}r^i_{m-1}+P^i_{A_{m-1}}r^i_{A_{m-1}}r^i_{m-2}+\cdots+P^i_{A_2}r^i_{A_2}r^i_{1}+u^i$, for $i=1,2,\ldots,3n+1$. Since $A_1$ does not know $u^i$ for $i=1,2,\ldots,3n+1$, therefore $A_1$ cannot get any information about the private set of other parties and also cannot obtain the intersection polynomial for computing the intersection $S_{A_2}\cap S_{A_3}\cap \ldots \cap S_{A_m}$.
	
\end{proof}
\begin{theorem}
	{\sf qMPSI} protocol is collusion resistant provided $A_1$ and $TP$ are non-collusive parties.
	
\end{theorem}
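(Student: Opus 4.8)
The plan is to argue in the usual semi-honest simulation style: for every coalition $C$ of colluding participants with $\{A_1,\mathrm{TP}\}\not\subseteq C$, I would exhibit a simulator that reconstructs the joint view of $C$ from the corrupted parties' own inputs $\{S_{A_j}:A_j\in C\}$ together with the prescribed output, so that $C$ learns nothing about an honest party's private set beyond what the published intersection already reveals. Two earlier facts are used as black boxes: the information-theoretic security of {\sf qOTP}, which makes every channel ciphertext appear maximally mixed to anyone lacking the relevant key; and the three internal-security statements for {\sf qOLE} (Alice gains no information on $f$, Bob gains none on $\alpha$, TP gains none at all), which let me replace each {\sf qOLE} sub-execution by its ideal input--output behaviour.

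First I would strip away the quantum layer. In every {\sf qOLE} call the helper uses the fresh pad $K_{AB}$ it never receives, so by the ``TP learns nothing'' statement a corrupt TP is useless to the coalition; hence whenever $\mathrm{TP}\in C$ we may discard it and assume $C\subseteq\{A_1,\dots,A_m\}$. Likewise a corrupt {\sf qOLE} participant sees only the output it is entitled to; the one delicate instance is the first {\sf qOLE} call, where $A_1$ plays Bob and receives $l=\alpha-d$ from $A_2$. Because TP is honest, it is the honest TP that samples and conceals the offset $d$, so $A_1$ cannot invert $l$ to read off $A_2$'s masked polynomial $P'_{A_2}$. This is precisely where the hypothesis that $A_1$ and TP do not collude is consumed: were it to fail, $A_1$ would recover $P'_{A_2}=P_{A_2}r_{A_2}$ at all $3n+1$ evaluation points, hence the polynomial $P'_{A_2}$, hence $S_{A_2}$ up to the junk roots of $r_{A_2}$.

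It then remains to handle the classical residual coalition $C\subseteq\{A_1,\dots,A_m\}$ by a short case split on which masking polynomials stay hidden from $C$. If $A_1\notin C$, then $u_{A_1}$ is hidden, every intermediate $P^i_j$ in a corrupt $A_j$'s view is offset by the uniformly random value $u_{A_1}(\alpha_i)$ and is simulated by a uniform sample, and the only structured item is $P_\cap$ (available when $A_2\in C$), which the simulator outputs as a random polynomial of degree $\le 3n$ whose zeros in $Z$ are exactly the given intersection. If $A_1\in C$ but $C\neq\{A_1\}$ (the singleton case being the previous theorem), then $C$ knows both $u_{A_1}$ and $u$, yet at least one $A_k$ is honest (else privacy is vacuous) and its mask $r_{A_k}$ is hidden because $r_{A_k}$ enters the protocol only inside the {\sf qOLE}-protected polynomial $P'_{A_k}=P_{A_k}r_{A_k}$; after subtracting the $C$-known part $P_{A_1}r_{A_1}+u_{A_1}$ from each $P^i_j$, the residual is a combination in which $P_{A_k}$ is still multiplied by the random $r_{A_k}$, and the simulator matches its law from the intersection alone.

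The step I expect to be the main obstacle is the blinding claim that underlies both cases: that $\sum_k P_{A_k}r_{A_k}\prod(\cdots)$, and more delicately the polynomial the coalition can isolate after removing its known summands, is, conditioned on its common zeros, distributed statistically like a uniformly random polynomial with exactly those zeros --- even though the multiplier of $P_{A_k}$ is a \emph{product} of several parties' random polynomials, some of whose factors a large coalition already knows. Pinning this down needs one to identify, for each honest $A_k$, the single factor ($r_{A_k}$, or a chain mask $r_{k-1}$) that is uniform, private and of sufficient degree to wash out the non-intersection roots; to track degrees so that the $3n+1$ sample points faithfully encode the relevant polynomial identities; and to use the side condition $r_{A_j},r_j\in\mathcal S$ (their auxiliary zeros avoid $Z=\bigcup_j S_{A_j}$) to exclude spurious common zeros. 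With that blinding lemma in hand, the enumeration over coalitions above is routine bookkeeping.
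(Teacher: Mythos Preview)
Your route is substantially more ambitious than the paper's, and one of your added cases contains a genuine gap.

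The paper's argument is three sentences. It reads the hypothesis as ``neither $A_1$ nor $\mathrm{TP}$ joins any coalition'', treats only the single maximal coalition $\{A_2,\dots,A_m\}\setminus\{A_k\}$, writes out the residual that coalition can isolate from the final transmitted value, namely $P^i_{A_k}r^i_{A_k}r^i_{k-1}+P^i_{A_1}r^i_{A_1}+u^i_{A_1}+u^i$, and asserts that since among the unknowns the coalition knows only the factor $r^i_{k-1}$ it cannot extract $P^i_{A_k}$. There is no simulator, no case split on coalitions, and the blinding is simply asserted. Your flagged ``blinding lemma'' is exactly that asserted step, so on the core content you and the paper are aligned; your simulation framework is a genuine upgrade in rigour, not a different idea.

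The gap is in your reading of the hypothesis as $\{A_1,\mathrm{TP}\}\not\subseteq C$ and the step ``by the `TP learns nothing' statement a corrupt TP is useless to the coalition; hence whenever $\mathrm{TP}\in C$ we may discard it''. That earlier theorem only bounds what $\mathrm{TP}$ learns \emph{alone}, from channel ciphertexts padded with the key $K_{AB}$ it never holds; it says nothing about $\mathrm{TP}$ colluding with a {\sf qOLE} participant. Concretely, in the call between $A_{j-1}$ (Bob) and an honest $A_j$ (Alice), a corrupt $A_{j-1}$ decrypts and reads $l=P'^{\,i}_{A_j}-d$, while $\mathrm{TP}$ chose $d$; if both sit in $C$ they jointly recover $P'^{\,i}_{A_j}$ at all $3n+1$ nodes, hence the degree-$\le 2n$ polynomial $P'_{A_j}=P_{A_j}r_{A_j}$, whose zero set contains $S_{A_j}$. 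So a coalition containing $\mathrm{TP}$ and $A_{k-1}$ (with $A_1$ honest, $k\ge 3$) already breaches $A_k$'s privacy, and your ``discard $\mathrm{TP}$'' reduction fails. The paper sidesteps this by barring $\mathrm{TP}$ from every coalition outright. If you restrict to that reading (both $A_1$ and $\mathrm{TP}$ honest), only your ``$A_1\notin C$'' branch survives, and together with the blinding lemma you already identify as the crux it collapses to the paper's argument dressed in simulation language.
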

\begin{proof}
	Suppose $A_2, A_3,\ldots,A_{k-1},A_{k+1},\ldots,A_m$ want to know some information about the private set $S_{A_k}$ of $A_k$. If they can extract $P^i_{A_k}$ for all $i=1,2,\ldots,3n+1$, they can get $P_{A_k}$ and from $P_{A_k}$ they may obtain $S_{A_k}$. From $P^i_m=P^i_{A_m}r^i_{A_m}r^i_{m-1}+P^i_{A_{m-1}}r^i_{A_{m-1}}r^i_{m-2}+\cdots+P^i_{A_2}r^i_{A_2}r^i_{1}+P^i_{A_1}r^i_{A_1}+u^i_{A_1}+u^i$ they subtract their inputs and gets $P^i_{A_k}r^i_{A_k}r^i_{k-1}+P^i_{A_1}r^i_{A_1}+u^i_{A_1}+u^i$ for all $i=1,2,\ldots,3n+1$. As they only know $r^i_{k-1}$ for all $i=1,2,\ldots,3n+1$, therefore cannot obtain $P^i_{A_k}$ for any $i=1,2,\ldots,3n+1$. Hence, they cannot obtain any information about the private set of $A_k$.
\end{proof}
\begin{theorem}
	TP cannot get any information about the private set of any party.
\end{theorem}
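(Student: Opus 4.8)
The plan is to reduce the statement to the corresponding property of {\sf qOLE} already proved (that TP learns nothing about $f(x)$, $\alpha$, or $f(\alpha)$ in a single execution) and then argue that nothing \emph{else} in {\sf qMPSI} ever exposes TP to the private data. First I would catalogue TP's entire view in an execution of {\sf qMPSI}. TP is invoked \emph{only} as the trusted helper inside the $(m-1)(3n+1)$ invocations of {\sf qOLE} in steps~1--2 of {\sf qMPSI.Intersection}; it takes no part in the sharing of the masking polynomials $u$ and $u_{A_1}$ (these are chosen by $A_1$ and distributed among $A_2,\ldots,A_m$ over QKD channels that exclude TP), nor in the transmission of the $R^i$ from $A_m$ to $A_1$, nor in the subtractions and interpolation performed by $A_1$ and $A_2$, nor in the final announcement. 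Hence TP's view is exactly the concatenation of its views in the individual {\sf qOLE} runs.

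Next I would invoke the security analysis of {\sf qOLE}: in each run, after TP has dispatched $|S''(x)\rangle,|d''\rangle,|g''\rangle$, it only participates in eavesdropping checks and never receives $l$ or $V(x)$; by the theorem ``TP cannot obtain $f(x)$, $\alpha$, and $f(\alpha)$'' and the information-theoretic security of {\sf qOTP}, TP's per-run view is (statistically) independent of the {\sf qOLE} inputs $(P^i_{j-1},r^i_{j-1})$, $P'^i_{A_j}$ and output $P^i_j$. Then, since the private set $S_{A_j}$ of any party determines and is determined by the polynomial $P_{A_j}$, and $P_{A_j}$ enters {\sf qMPSI} only through the masked evaluations $P'^i_{A_j}=P'_{A_j}(\alpha_i)$ (and, for $A_1$, through $P^i_1=P_1(\alpha_i)$) that are fed into {\sf qOLE}, the independence just noted, combined with the fact that TP sees nothing outside the {\sf qOLE} runs, shows that TP's overall view is independent of every $S_{A_j}$. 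Therefore TP cannot extract any information about any party's private set.

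The main obstacle is justifying that \emph{composition} is safe --- that TP cannot aggregate the scraps it sees across the $m-1$ chained {\sf qOLE} executions and the $3n+1$ evaluation points into something informative. I would handle this by noting that each per-run view of TP is independently simulable from public parameters alone (the decoy statistics together with ciphertexts under fresh, one-time, mutually independent {\sf qOTP} keys $K_A,K_B,K_{AB}$ drawn per run), so the joint view is simulable as well, precluding any cross-run correlation with the private data. A secondary point to be careful about is the standing assumption (already used in the collusion-resistance theorem) that TP does not collude with $A_1$ or any other $A_j$, so that TP genuinely holds only the helper's view and not, for instance, $A_1$'s masking polynomials $r_{A_1},u_{A_1}$ or the shared $u$; without this hypothesis the claim would fail, and I would state it explicitly at the start of the proof.
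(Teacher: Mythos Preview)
Your reduction to the per-run {\sf qOLE} guarantee and your explicit treatment of composition via simulatability are more careful than what the paper does on that front; the paper simply asserts that ``TP cannot obtain any information during the {\sf qOLE} protocol'' and moves on. In that respect your argument subsumes the paper's.

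However, there is one ingredient in the paper's proof that your catalogue of TP's view omits. You assert that TP ``takes no part in \ldots\ the transmission of the $R^i$ from $A_m$ to $A_1$'' and therefore that TP's view is exactly the concatenation of its {\sf qOLE} views. The paper does \emph{not} take this for granted: it explicitly considers TP intercepting the messages $P^i_m$ sent from $A_m$ to $A_1$ in step~4 and the values $P^i_m-u^i_{A_1}$ sent from $A_1$ to $A_2$ in step~5, and then argues that these reveal nothing because TP does not know the masking values $u^i_{A_1}$ and $u^i$. Since steps~4--5 of {\sf qMPSI.Intersection} are not routed through {\sf qOLE} and the protocol description does not state that they are {\sf qOTP}-encrypted under keys TP lacks, your exclusion of them from TP's view is an unargued assumption. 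The additive masks $u_{A_1}$ and $u$ are exactly what the paper uses to plug that hole. Adding one line---either that those final transmissions are encrypted under pairwise keys TP does not hold, or that even if TP observes them the one-time additive masks $u^i_{A_1},u^i$ render them uniformly random---would close the gap and align your proof with (indeed, strengthen) the paper's.
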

\begin{proof}
	TP only initializes the {\sf qOLE} protocol. We already showed that any external party cannot obtain anything about anyone's private set and TP cannot obtain any information during the {\sf qOLE} protocol. When $A_m$ sends $P^i_m=P^i_{A_m}r^i_{A_m}r^i_{m-1}+P^i_{A_{m-1}}r^i_{A_{m-1}}r^i_{m-2}+\cdots+P^i_{A_2}r^i_{A_2}r^i_{1}+P^i_{A_1}r^i_{A_1}+u^i_{A_1}+u^i$ to $A_1$ and $A_1$ sends $P^i_{A_m}r^i_{A_m}r^i_{m-1}+P^i_{A_{m-1}}r^i_{A_{m-1}}r^i_{m-2}+\cdots+P^i_{A_2}r^i_{A_2}r^i_{1}+P^i_{A_1}r^i_{A_1}+u^i$ to $A_2$, TP may intercept to gain the intersection of the private sets of $A_1,A_2,\ldots,A_m$. Since TP does not know $u^i_{A_1}$ and $u^i$ for any $i=1,2,\ldots.3n+1$, therefore it obtains nothing about the private sets or intersection of the private sets.
\end{proof}

\subsection{Efficiency Analysis and Comparison}
In this section, we will discuss the communication and computational overhead of our proposed design {\sf qMPSI}. The private polynomials are defined over $\mathbb{Z}_p$. {\sf qOLE} protocol is used as a building block in the deisgn of {\sf qMPSI}. {\sf qOLE} is executed $(m-1)(3n+1)$ number of times during the execution of {\sf qMPSI}, where $m$ is the number of parties and $n$ is the size of the private set. Therefore, the communication and computation costs of {\sf qMPSI} are $\mathcal{O}(mn\log_2{p})$ and $\mathcal{O}(mn\log_2{p})$, respectively.

To the best of our knowledge, there are two MPSI protocol in the quantum domain. Shi et al.~\cite{shi2023edge} in 2023 developed a quantum MPSI which ensures perfect security. Shi et al.~\cite{shi2023edge} introduced a semi-honest edge server and two non-collusive fog nodes, and design a secure and feasible edge-assisted quantum protocol for MPSI. The design of~\cite{shi2023edge} utilized secure multiparty XOR and secure multiparty logical AND. Imran et al.~\cite{imran2023secure} also designed a secure MPSI in the quantum domain. It uses exact the quantum period-finding algorithm (EQPA) as a subroutine. They constructed a quantum multiparty private set intersection (PSI) by transforming the PSI problem into the problem of computing the GCD.
The protocol given in~\cite{imran2023secure} uses rather complicated quantum operators. Since it uses Shor's algorithm and other quantum heavy resources, it is not practical to realize on a large scale with existing quantum computing hardware capabilities. The communication complexity of~\cite{imran2023secure} is higher than the communication complexity of {\sf qMPSI}. The design of~\cite{shi2023edge} is comparable to the design of {\sf qMPSI} in terms of quantum resources used and quantum operators employed. Also, {\sf qMPSI} has a slight edge over~\cite{shi2023edge} in communication complexity. The communication complexity of~\cite{shi2023edge} increases at a quadratic rate in terms of the number of parties involved. {\sf qMPSI} is based on {\sf qOLE} which is seen as a more fundamental building block for secure multiparty computation. {\sf qMPSI} is generic in the sense that any quantum secure OLE can be used to instantiate the protocol. The results of the comparison are summarized in Table~\ref{tab:mpsi-compare}.
\begin{table*}
	\centering
	\begin{tabular}{|p{2.8cm}|p{2.8cm}|p{2.8cm}|p{2.8cm}|p{2.8cm}|}\hline
		Protocols  & Quantum resources & Quantum operators & Quantum measurements & Quantum Communication cost \\ \hline
		\cite{imran2023secure} &  Shor's algorithm quantum resources & Quantum Fourier transforms and CNOT gate & Projective measurements in higher dimensional & $\mathcal{O}(m^2(\log_2{p})^2)$\\ \hline
		\cite{shi2023edge} & Single qubits photons & single qubit $P$ gate and $H$ gate & Projective measurements in $2$ -dimension Hilbert space & $\mathcal{O}(m^2 n)$\\
		\hline
		{\sf qMPSI} & Single qubit photons & single qubit $X$ gate and $Z$ gate & Projective measurements in $2$-dimension Hilbert space & $\mathcal{O}(mn\log_2{p})$ \\ \hline
	\end{tabular}
	\caption{Comparative summary of {\sf qMPSI}}
	\label{tab:mpsi-compare}
\end{table*}

\section{Conclusion}
In this paper, quantum secure protocols for secure multiparty computation (MPC) were designed. The previously introduced number theoretic based MPC protocols are not secure because quantum algorithms like Shor's algorithm~\cite{shor1999polynomial} can be employed to break their security. Firstly, a design of the information theoretic secure quantum secure oblivious linear evaluation ({\sf qOLE}) is proposed. Next, {\sf qOLE} is used to develop a quantum secure multiparty private set intersection (namely {\sf qMPSI}). We believe that it is worth investigating and  designing other types of MPC protocols in the quantum domain.

\section*{Declarations}

\subsubsection*{Funding Details} This work was supported by the ``International Mathematical Union (IMU) and the Graduate Assistantships in Developing Countries (GRAID) Program".
\subsubsection*{Conflict of interest} The authors state that they have not known competing financial interests or personal connections that may seem to have influenced the work described in this study.

\subsubsection*{Data availability} Data sharing is not applicable to this article as no new
data were generated or analyzed to support this research.

\bibliographystyle{IEEEtran}
\bibliography{ref}

\vspace*{-1cm}
\begin{IEEEbiography}[{\includegraphics[width=1in,height=1.15in,clip,keepaspectratio]{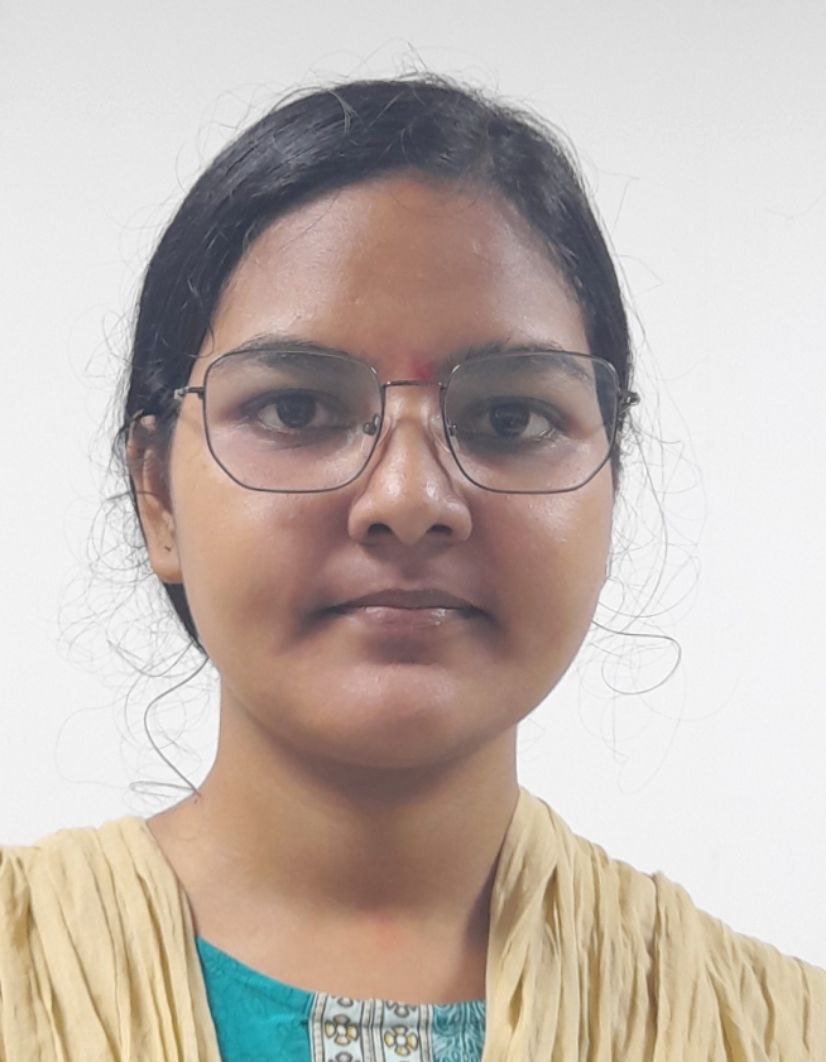}}]{Tapaswini Mohanty} is currently working as a Ph.D. student in the Department of Mathematics,	NIT Jamshedpur. He completed his Masters degree from Institute of Science, BHU in 2019. Her	research interests include Quantum Cryptography, and Private Set Operations.
\end{IEEEbiography}

\vspace*{-1cm}
\begin{IEEEbiography}[{\includegraphics[width=1in,height=1.15in,clip,keepaspectratio]{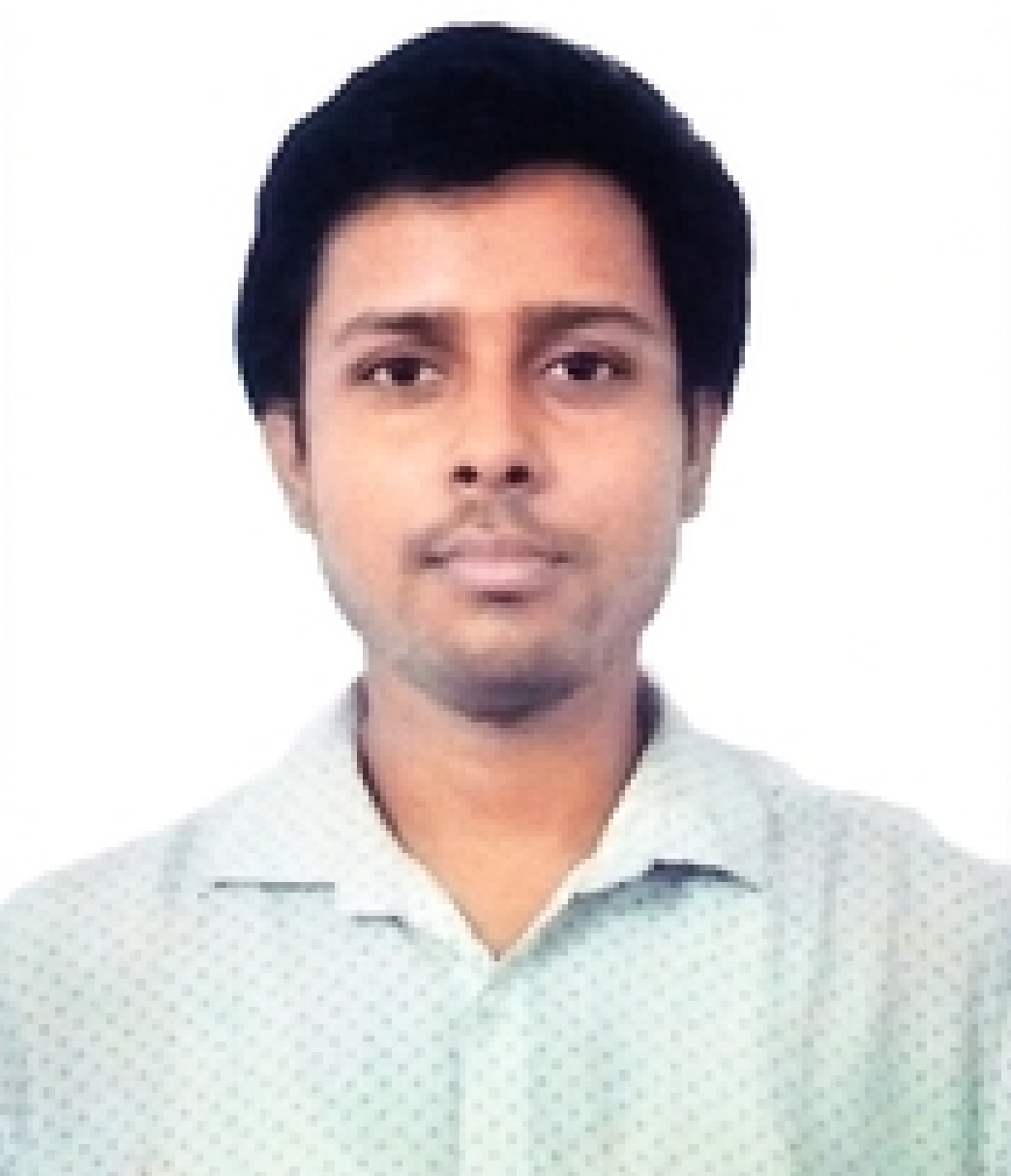}}]{Vikas Srivastava} is working as research scholar in the Department of Mathematics,  National Institute of Technology, Jamshedpur, India. He has completed his B.S.-M.S. dual degree in Mathematics from Indian Institute of Science Education and Research (IISER), Mohali, India in 2017. His research interests include cryptography, network security and blockchain technology.
\end{IEEEbiography}

\vspace*{-1cm}
\begin{IEEEbiography}[{\includegraphics[width=1in,height=1.15in,clip,keepaspectratio]{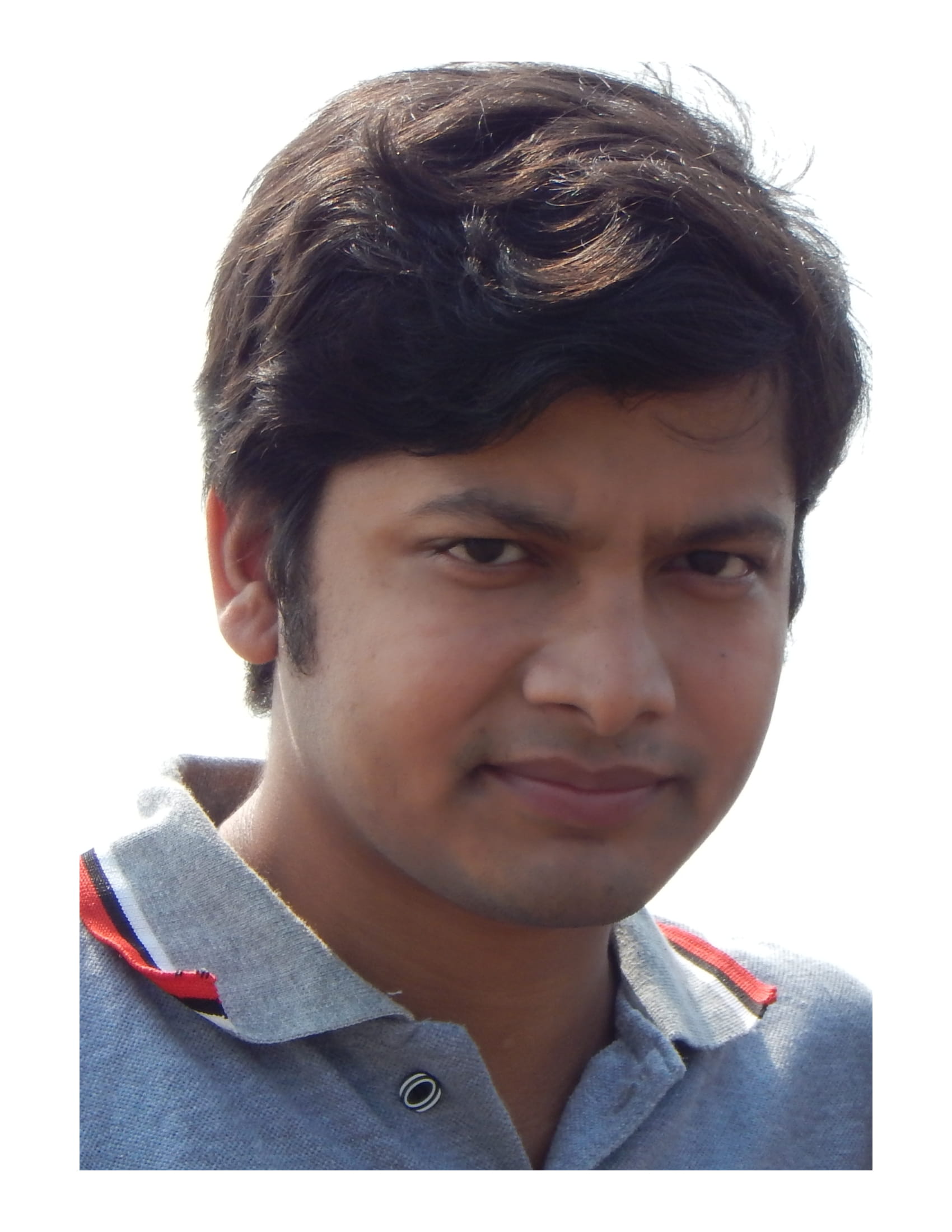}}]{Sumit Kumar Debnath} received a M.Sc. degree in Mathematics from IIT
	Kharagpur in 2012, and also a Ph.D. degree in Cryptology and Network Security from the Department of Mathematics, IIT Kharagpur in 2017. He is currently an Assistant Professor at the Department of Mathematics, National Institute of Technology, Jamshedpur, India.  He is a life member of the Cryptology Research Society of India (CRSI). His research interests include multivariate cryptography, lattice-based cryptography, network security and blockchain.  He has published more than 28 papers in international journals and conferences in his research areas.
\end{IEEEbiography}

\vspace*{-1cm}
\begin{IEEEbiography}[{\includegraphics[width=1in,height=1.15in,clip,keepaspectratio]{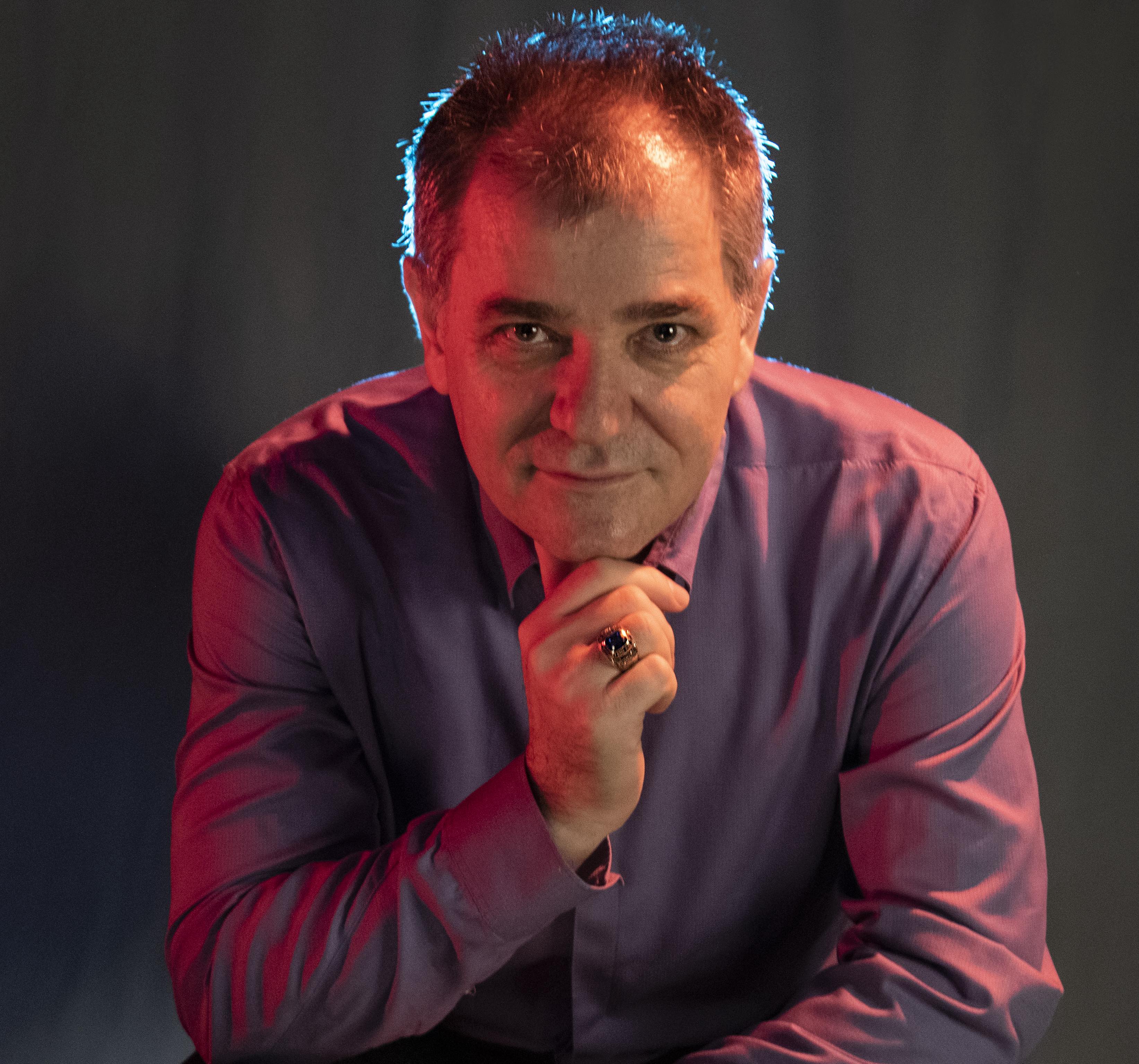}}]{Pantelimon St\u anic\u a} received a Masters of Arts in Mathematics from Bucharest University, Romania in 1992. He received a Ph.D./Doctorate in Algebra from the Institute of Mathematics of the Romanian Academy in 1998 also a Ph.D. degree in Mathematics from State University of New York at Buffalo in 1998. He is currently an Professor and Manager of the Secure Communication program in the Department of Applied Mathematics at Naval Postgraduate School, Monterey, CA 93943, USA. He has published more than 150 papers in refereed journals. He has also published more than 35 papers in refereed conference proceedings. He has also won the 2021 George Boole International Prize for considerable contributions to the theory of Boolean functions.
\end{IEEEbiography}

\end{document}